\newtheorem{theorem}{Theorem}
\newtheorem{corollary}{Corollary}
\newtheorem{lemma}{Lemma}
\DeclareMathOperator*{\argmax}{argmax}
\begin{document}


\title{Bounded Incentives in Manipulating the Probabilistic Serial Rule}


\author[1]{Zihe Wang}
\author[2]{Zhide Wei}
\author[3]{Jie Zhang}
\affil[1]{Shanghai University of Finance and Economics, China\\ 

wang.zihe@mail.shufe.edu.cn }
\affil[2]{Peking University, China\\ 

zhidewei@pku.edu.cn}
\affil[3]{University of Southampton, U.K.\\ 

jie.zhang@soton.ac.uk}
\date{}


\maketitle
\begin{abstract}
The Probabilistic Serial mechanism is well-known for its desirable fairness and efficiency properties. It is one of the most prominent protocols for the random assignment problem. 
However, Probabilistic Serial is not incentive-compatible, thereby these desirable properties only hold for the agents' declared preferences, rather than their genuine preferences. 
A substantial utility gain through strategic behaviors would trigger self-interested agents to manipulate the mechanism and would subvert the very foundation of adopting the mechanism in practice.
In this paper, we characterize the extent to which an individual agent can increase its utility by strategic manipulation.
We show that the incentive ratio of the mechanism is $\frac{3}{2}$. That is, no agent can misreport its preferences such that its utility becomes more than 1.5 times of what it is when reports truthfully.
This ratio is a worst-case guarantee by allowing an agent to have complete information about other agents' reports and to figure out the best response strategy even if it is computationally intractable in general.
To complement this worst-case study, we further evaluate an agent's utility gain on average by experiments. The experiments show that an agent' incentive in manipulating the rule is very limited. 
These results shed some light on the robustness of Probabilistic Serial against strategic manipulation, which is one step further than knowing that it is not incentive-compatible.
 \end{abstract}

\section{Introduction}
Resource allocation is a fundamental and widely applicable area within AI and computer science.
The \emph{random assignment problem} is central in the resource allocation area, which has wide applications in allocating workers to shifts, houses to people, dormitories to students, ramp and hangars to airlines, and chores to cleaning staff.
In the problem, there are a set of agents and a set of items. The agents participate in a mechanism by reporting their private preferences over the items. The mechanism then assigns items to agents, according to a pre-defined allocation rule.

The random assignment problem, which is also known as the \emph{one-sided matching problem}, was introduced in \cite{HZ:79} and has been studied extensively ever since. Over the years, several mechanisms have been investigated, including Probabilistic Serial \cite{DBLP:journals/jet/BogomolnaiaM01,katta2006solution,BCK:11,DBLP:journals/jet/BogomolnaiaH12,BCKM2013,AS:13}, Random Priority \cite{RePEc:ecm:emetrp:v:66:y:1998:i:3:p:689-702,Sv99,aziz2013computational,DBLP:conf/atal/ChristodoulouFF16}, and Competitive Equilibrium from Equal Incomes (CEEI) \cite{BudishCEEI2011}. In the indivisible goods setting, the Top Trading Cycles (TTC) method is well-studied and generalized to investigate various problems. In particular, \citeauthor{RePEc:ecm:emetrp:v:66:y:1998:i:3:p:689-702} (1998) proposed an adaptation of the TTC method and established an equivalence between the adapted mechanism and Random Priority. \citeauthor{Kesten2009} (2009) proposed several extensions of these popular mechanisms and presented an equivalence result between those mechanisms in terms of economic efficiency.

There are various desired properties one would like these mechanisms to fulfill, especially when considering to deploy them in practice. In terms of \emph{efficiency}, an allocation \emph{Pareto improves} another if each agent has at least the same utility in the first, and there is at least one agent where the utility is greater than that of the second. An allocation is \emph{Pareto efficient} if there is no allocation which Pareto improves it. In terms of \emph{fairness}, an allocation is \emph{envy-free} if no agent prefers another agent’s allocation to its own. An allocation is \emph{proportional} if each of the $n$ agents receives at least $1/n$ of the resources by its own subjective valuation. Also, a mechanism is \emph{anonymous} if agents with precisely the same valuation functions must have the same probabilities of receiving each item; a mechanism is \emph{nonbossy} if an agent cannot change other agents' allocation without changing its own allocation.
\emph{Incentive-compatibility} is an important property in understanding the role that strategic behavior can play. A mechanism is \emph{manipulable} if an agent can misreport its preferences and improve its utility. A mechanism that is not manipulable is \emph{incentive-compatible} (a.k.a., \emph{truthful}). 

Truthfulness, in a sense, is on top of the properties mentioned above. For example, Pareto efficiency is defined in regard to the declared preferences of the agents, rather than their genuine preferences. Without the truthfulness property, an agent has little or strong incentive to manipulate in a mechanism and could potentially cause significant welfare loss as well as unfairness to the application scenarios.

As one can expect, these properties are not compatible, as shown by several impossibility results. A notable trilemma by \citeauthor{ZHOU:90} (1990) states that there exists no mechanism that is truthful, symmetric, and ex-ante Pareto efficient. \citeauthor{DBLP:journals/jet/BogomolnaiaM01} (2001) shows that there does not exist a truthful, equal treatment of equals, and ordinal efficient mechanism.
Given these impossibility results, one reasonable choice for implementation is the Random Priority mechanism. It is easy to implement and incentive-compatible, but lacks efficiency, as all agents may increase their probabilities of receiving more preferred items by implementing Probabilistic Serial. Moreover, Probabilistic Serial is \emph{SD-envy-free}.
However, it is not truthful. 
In fact, many mechanisms, such as CEEI, is manipulable as well. 
\citeauthor{DBLP:journals/ijgt/EkiciK16} (2016) shows that when agents are not truthful, the outcome of Probabilistic Serial may not satisfy desirable properties related to efficiency and envy-freeness.
Hence, researchers look to develop refined analytic and experimental works to answer a basic question: \emph{which mechanism to employ in practical applications}?

There are at least two scenarios that agents may dispense with manipulation in mechanisms that are not incentive-compatible. One scenario is where computational complexity is considered as an obstacle against manipulation. If manipulation is hard to compute, possibly agents will behave sincerely. The other scenario is by characterizing the extent to which strategic manipulations can increase utilities. If agents can increase their utility by only a small extent, given that there is an inherent cost for them to collect necessary information from other agents in order to compute the best response strategy, perhaps they would behave truthfully.


In this paper, we consider the second scenario in the Probabilistic Serial mechanism. 
We adopt the {\em incentive ratio} notion~\cite{DBLP:conf/esa/ChenDZ11}~\cite{DBLP:conf/icalp/ChenDZZ12} to quantify agents' incentive to deviate from reporting their actual private information. Informally, it is the factor of the largest possible utility gain that an agent achieves by behaving strategically, given that all other agents have their strategies fixed. Our main result is the following theorem.

{\bf Theorem:} In the Probabilistic Serial mechanism, when the number of agents is no less than the number of items, no agent is able to unilaterally manipulate and increase its utility to more than 1.5 times of the utility when reports truthfully.

In understanding the implication of the 3/2 approximation on the manipulation incentives, we put in mind of two technical facts. 
Firstly, the incentive ratio is defined in a \emph{worst-case} sense. Therefore, its bound is the strongest approximation guarantee of the manipulation incentives, in all cases. It also means that the incentive ratio bound ignores the likelihood that extreme cases happen. It could be the case that the probability for an agent to attain the 3/2 times utility is negligible. Indeed, our tight bound example is rather pathological as it is constructed artificially. 
Secondly, our results built upon the \emph{complete information} and \emph{perfect rationality} assumptions. That is, agents have complete information about other agents' preferences and are able to compute the best response strategy accordingly.  
If either of these assumptions is missing, the agents' power to manipulate the mechanism would be much smaller than what the $\frac{3}{2}$ bound implies. 
In fact, computing the best response strategy is intractable in general~\cite{DBLP:conf/atal/AzizGMMNW15}.
Moreover,~\citeauthor{DBLP:journals/geb/Hugh-JonesKV14} (2014) showed that humans do not manipulate Probabilistic Serial mechanism optimally. \citeauthor{HalpernPassSeeman2014} (2014) provided an excellent survey of work using \emph{bounded rationality} in decision theory.
Therefore, a small constant incentive ratio, in particular, 3/2 in our results, indicate that the agents' incentive and ability to manipulate the mechanism is reasonably bounded. 
To support this statement, we further conduct some experiments. The experiments demonstrate that the agents' utility gain by strategic manipulation is rather limited on average, even if there are not many agents and items. 

From the technical point of view, characterizing the incentive ratio of a mechanism is challenging. As a starting point, determining the best response strategy, i.e., a report that maximizes the agent’s expected utility is often intractable. In fact, this task is NP-hard in the context of the Probabilistic Serial mechanism~\cite{DBLP:conf/atal/AzizGMMNW15}, and only local manipulation has been done experimentally by greedy search~\cite{DBLP:conf/ijcai/MennleWPS15}. Our proofs circumvent this obstacle by considering the three possible cases of an agent's utility when it truthful report its preferences. It is relatively easy to bound the incentive ratio in two out of the three cases. Nevertheless, the other case requires a fine-grained analysis in which several in-depth observations are critical.  

\subsection{Related Work}
In the presence of incentives, the random assignment problem has been extensively studied in Computer Science and Economics over the years \cite{ZHOU:90,DG:10,mennle2014axiomatic}.  We refer the interested reader to  surveys \cite{AS:13,SU:11}. 

One of the focal points is how precisely agents might compute beneficial manipulations.
What if it is just too computationally difficult to compute a manipulation~\cite{Bartholdi1989,Bartholdi1991}?
Manipulation has been shown to be computationally hard to compute in many voting situations, e.g.,~\cite{DBLP:conf/aaai/DaviesNW12,DBLP:journals/ai/DaviesKNWX14}. 

For the Probabilistic Serial mechanism,~\citeauthor{DBLP:conf/atal/AzizGMMNW15} (2015) showed that computing the best response (manipulation) under complete information of the other agents’ strategies is NP-hard for Expected Utility maximization, a.k.a., EU-relation, but polynomial-time algorithm exists for the Downward Lexicographic maximization, a.k.a., DL-relation. In addition, they showed that Nash deviations under the Probabilistic Serial mechanism can cycle, but a pure Nash equilibrium is guaranteed to exist. Unfortunately, computing a pure Nash equilibrium is intractable in general.

The empirical work by \citeauthor{DBLP:journals/aamas/HosseiniLC18} (2018) disclosed some  results on the manipulability of the Probabilistic Serial mechanism. Their experiments show that the mechanism is almost always manipulable for various combination of agents and items, and the fraction of strongly manipulable profiles goes to one as the ratio of items to agents increases. However, their results does not inform to what extend these beneficial manipulations are.

An interesting work by~\citeauthor{RePEc:ecm:emetrp:v:78:y:2010:i:5:p:1625-1672} (2010) showed that Random Priority and Probabilistic Serial mechanisms become equivalent when there exist a large number of copies of each item type. 
Their results imply that, on the one hand, the inefficiency of the Random Priority mechanism becomes small and disappears in the limit, as the economy becomes large; on the other hand, the incentive problem of the Probabilistic Serial mechanism disappears in large economies.

The incentive ratio notion was proposed by~\cite{DBLP:conf/esa/ChenDZ11}~\cite{DBLP:conf/icalp/ChenDZZ12}. The authors investigated the buyers' incentive to manipulate Fisher markets. They showed that no agent could gain more than twice and 1.445 times by strategizing in Fisher markets with linear, Leontief utility functions, and Cobb-Douglas utility functions, respectively.



\section{Preliminaries}
The random assignment problem consists of $n$ agents and $m$ divisible items $o_j, j\in[m]$\footnote{In general, $n$ and $m$ are not necessarily equal.  Many works study the case that $n=m$, as in this case, the allocation matrix is doubly stochastic. Thus, any probabilistic allocation can be seen as a convex combination of a set of deterministic allocations, and equivalence between a probabilistic allocation of indivisible items and a fractional assignment of divisible items is established. Our results hold for the case that $n \ge m$, but the proofs cannot be extended to the case $n < m$ straightforwardly.}. 
In the Probabilistic Serial mechanism, agents express strict ordinal preferences, $\succ$, over items. In other words, they are not indifferent between any two items. 
As an \emph{ordinal mechanism}, Probabilistic Serial has several  advantages over cardinal mechanisms that require agents to declare their actual utilities, including simplicity and low communication complexity.
The agents are endowed with von Neumann-Morgenstern utilities over the items. 
We denote the utility derived by agent $i$ on obtaining a unit of item $j$ by $a_{ij}$. Under standard normalization, $0\le a_{ij} \le 1$, $\forall i \in [n], j \in [m]$. Denote vector $\mathbf{a}_i = (a_{i1}, \cdots, a_{1m})$.


The Probabilistic Serial rule collects agents' preference $\succ$ and output an assignment of items to them. Denote an assignment by a matrix $\mathrm{X}=[x_{ij}]_{n \times m}$, where $x_{ij}$ indicates the probability of agent $i$ receiving item $j$, $\sum_{i} x_{ij}=1, \forall j$. So, 
the expected utility of agent $i$ is  $u_i = \sum_{j} a_{ij}x_{ij}$. 
Agents are self-interested and  may misreport their ordinal preferences if that results in a better allocation (from their perspective). In that case, $u_i(\succ_i, \succ_{-i}) < u_i(\succ'_i, \succ_{-i})$, where $\succ_i$ is agent $i$'s true preferences, $\succ_{-i}$ is other agents' preferences, and $\succ'_i$ is a misreport by agent $i$. We adopt the \emph{incentive ratio} notion to characterize the extent to which utilities can be increased by strategic plays of individuals. The {\em incentive ratio} of agent $i$ in mechanism $M$ is\footnote{Here we replace the ordinal relations $\succ$ by cardinal values $a_{ij}$ as the notion also applies to cardinal mechanisms.}
\begin{align*}
    r_i^M=\max_{\mathbf{a}_{-i}} \
\frac{\max_{\mathbf{a}'_i} u'_i(\mathbf{a}'_i, \mathbf{a}_{-i})}{u_i(\mathbf{a}_i, \mathbf{a}_{-i})}.
\end{align*}
Note that the denominator is the utility of agent $i$ when it truthfully reports its preferences, and the numerator is the largest possible utility of agent $i$ when it unilaterally misreports its preferences. The incentive ratio of agent $i$ is then the maximum value of the ratio over all possible inputs of other agents. The incentive ratio of a mechanism $M$ is then $\max_i r_i^M$.  
Throughout the paper, w.l.o.g., we consider the strategic manipulation of agent 1. 

A standard approach to characterize manipulation in mechanism design (for example, voting and market design) is to consider that the manipulating agent has complete knowledge about the other agents' reports. The definition of incentive ratio makes a similar assumption here. Agent 1 knows $\succ_{-1}$ and can figure out its best response strategy $\succ'_1$ accordingly.

The Probabilistic Serial mechanism simulates a \emph{simultaneous eating algorithm}. In the mechanism, agents simultaneously ``eat'' their most preferred items at a uniform speed, moving onto their next most preferred item whenever an item is fully eaten. 
The following example appeared in \cite{DBLP:journals/jet/BogomolnaiaM01} illustrates that the mechanism is not incentive-compatible. 

{\bf Example:} 
There are three agents and three items $a, b,$ and $c$. The agents preferences and the corresponding allocation are

\begin{tabular}{ c c c c c }
  1: $b \succ a \succ c$  &     &   0 & 3/4 & 1/4 \\
  2: $a \succ b \succ c$  & X = & 1/2 & 1/4 & 1/4 \\
  3: $a \succ c \succ b$  &     &   1/2 & 0  & 1/2 
\end{tabular}

If agent 1 misreports its preferences as follows, then the allocation changes accordingly. 

\begin{tabular}{ c c c c c }
  1: $a \succ b \succ c$  &     &   1/3 & 1/2 & 1/6 \\
  2: $a \succ b \succ c$  & X = & 1/3 & 1/2 & 1/6 \\
  3: $a \succ c \succ b$  &     &   1/3 & 0  & 2/3 
\end{tabular}

For some utility function that is compatible with agent 1's true preferences, for example, $a_{1a}=0.9, a_{1b}=1, a_{1c}=0$, its utilities are $u_1=0.75$ in the \emph{truthful profile} and $u'_1=0.8$ in the \emph{manipulation profile}. The high-level intuition behind this manipulative example is that, both items $a$ and $b$ are important to agent 1, but item $a$ is more competitive as the other two agents place it as their most preferred item; so, instead of start eating a less-competitive item $b$, it is better for agent 1 to start with eating item $a$. 


\section{Incentive Ratio Upper Bound}
In this section, we prove the $\frac{3}{2}$ incentive ratio upper bound. 
Our first step is a reduction, which shows that it is sufficient to consider the instances in which agents' utilities are \emph{dichotomous}. That is, agent 1's preferences $a_{1j}$ are either close to 1, or close to 0, $\forall j=1,\cdots,m$.

\begin{lemma}\label{dichotomous}
Given any truthful profile $(\succ_1, \succ_{-1})$ and agent 1's cardinal preference $a_{1j}$'s that are compatible with the ordering $\succ_1$, denote ratio $c=\frac{u'_1}{u_1}$, where $u'_1$ is agent 1's maximum utility attainable by manipulation. Then one can always construct a corresponding dichotomous preference $b_{1j}$'s that are also compatible with $\succ_1$, such that the ratio $c$ is no less than before. 
\end{lemma}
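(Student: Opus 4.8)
The plan is to exploit that the Probabilistic Serial allocation depends only on the agents' \emph{ordinal} reports, so that once a misreport is fixed, agent 1's utility is a \emph{linear} function of its cardinal values, and the ratio $c$ becomes a linear-fractional objective that is maximized at a vertex of the set of compatible cardinal vectors --- which is precisely a dichotomous point. Relabel the items so that $1 \succ_1 2 \succ_1 \cdots \succ_1 m$, hence any compatible cardinal vector satisfies $1 \ge a_{11} \ge a_{12} \ge \cdots \ge a_{1m} \ge 0$. Write $\mathbf{x}=(x_{11},\dots,x_{1m})$ for agent 1's allocation under the truthful profile and, for each ordinal misreport $\pi$, write $\mathbf{x}^{\pi}$ for its allocation under $(\pi,\succ_{-1})$; all of these vectors are constants determined solely by the ordinal data. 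Let $\pi^{*}$ be an optimal misreport for the given $\mathbf{a}_1$, so that $u_1=\sum_j a_{1j}x_{1j}$ and $u'_1=\sum_j a_{1j}x^{\pi^{*}}_{1j}$ are \emph{both linear} in $\mathbf{a}_1$.

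Next I would decompose $\mathbf{a}_1$ over the dichotomous ``threshold'' vectors. Let $\mathbf{t}^{(k)}=(\underbrace{1,\dots,1}_{k},0,\dots,0)$ for $k=0,1,\dots,m$. Setting $\lambda_0=1-a_{11}$, $\lambda_k=a_{1k}-a_{1,k+1}$ for $1\le k\le m-1$, and $\lambda_m=a_{1m}$, one checks that $\lambda_k\ge 0$, $\sum_{k=0}^{m}\lambda_k=1$, and $\mathbf{a}_1=\sum_{k=0}^{m}\lambda_k\mathbf{t}^{(k)}$. Since $\mathbf{t}^{(0)}=\mathbf{0}$ contributes nothing, linearity of the numerator and denominator gives
\begin{align*}
c \;=\; \frac{u'_1}{u_1} \;=\; \frac{\sum_{k=1}^{m}\lambda_k\big(\sum_{j\le k}x^{\pi^{*}}_{1j}\big)}{\sum_{k=1}^{m}\lambda_k\big(\sum_{j\le k}x_{1j}\big)}.
\end{align*}
The key positivity fact is that in Probabilistic Serial agent 1 starts eating its top item $1$ at time $0$, so $x_{11}>0$ and thus every denominator term $\sum_{j\le k}x_{1j}\ge x_{11}>0$. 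The weighted mediant inequality then yields an index $k^{*}$ (with $\lambda_{k^{*}}>0$) for which $\big(\sum_{j\le k^{*}}x^{\pi^{*}}_{1j}\big)/\big(\sum_{j\le k^{*}}x_{1j}\big)\ge c$. Thus the dichotomous vector $\mathbf{t}^{(k^{*})}$ already attains ratio at least $c$ \emph{using the single misreport} $\pi^{*}$; since the true manipulation utility at $\mathbf{t}^{(k^{*})}$ is the maximum over all misreports, it is no smaller, so the manipulation ratio at $\mathbf{t}^{(k^{*})}$ is at least $c$.

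Finally, $\mathbf{t}^{(k^{*})}$ has ties and so is not strictly compatible with $\succ_1$. I would obtain the claimed $b_{1j}$ by perturbing $\mathbf{t}^{(k^{*})}$ into a strictly decreasing vector whose first $k^{*}$ entries lie within $\epsilon$ of $1$ and whose remaining entries lie within $\epsilon$ of $0$; because only finitely many linear functions (one per misreport) are involved, the ratio varies continuously in $\epsilon$ and is preserved as $\epsilon\to 0$, giving a dichotomous $\mathbf{b}_1$ compatible with $\succ_1$ whose ratio is at least $c$.

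I expect the main obstacle to be the two points that make the mediant step legitimate: establishing the strict positivity $x_{11}>0$ (which rests on the observation that every agent begins eating its favorite item at time $0$, so agent 1 accumulates a positive share of item $1$ before it can be exhausted) and handling the perturbation to strict preferences so that the bound survives as $\ge c$ rather than degrading. The remaining ingredients --- ordinality of the mechanism, the convex decomposition over threshold vectors, and the mediant inequality --- become routine once the problem is recast as a linear-fractional optimization over the order polytope of $\succ_1$-compatible cardinal vectors.
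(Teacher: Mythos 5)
Your proof is correct and takes essentially the same route as the paper's: your convex decomposition of $\mathbf{a}_1$ over the threshold vectors $\mathbf{t}^{(k)}$ is exactly the paper's Abel-summation rewriting of $c$ with weights $a_{1k}-a_{1,k+1}$, your mediant step corresponds to the paper's choice of $k=\argmax_j \bigl(\sum_{h\le j}l'_h\bigr)/\bigl(\sum_{h\le j}l_h\bigr)$, and your $\epsilon$-perturbation of $\mathbf{t}^{(k^*)}$ is precisely the paper's construction $b_{1,j}=1-(j-1)\epsilon$ for $j\le k$ and $b_{1,j}=(m-j)\epsilon$ for $j>k$. The only differences are refinements the paper leaves implicit, namely the positivity $x_{11}>0$ of the denominators and the observation that re-optimizing the misreport under $\mathbf{b}_1$ can only increase the ratio.
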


\begin{proof}
W.l.o.g., assume $o_1 \succ_1 o_2 \succ_1 \cdots \succ_1 o_m$ and $1 \ge a_{11} > a_{12} > \cdots > a_{1m} \ge 0$. Denote $l_j$ and $l'_j$ the length of time that agent 1 spent on eating item $o_j$ in the truthful profile and the manipulation profile, respectively. By definition, 
\begin{align*}
    c(a_{1j}) = \frac{u'_1}{u_1} = \frac{a_{11} l'_1 + a_{12} l'_2 + \cdots + a_{1m} l'_m}{a_{11} l_1 + a_{12} l_2 + \cdots + a_{1m} l_m}.
\end{align*}
We will show that by carefully pushing $a_{1j}$'s towards binary values 1 and 0, the ratio $c$ is non-decreasing. Denote 
\begin{align*}
    k = \argmax_j \frac{l'_1 + l'_2 + \cdots + l'_j}{l_1 + l_2 + \cdots + l_j}, \,\,\,\
    c_{\max} = \frac{\sum_{j=1}^{k} l'_j}{\sum_{j=1}^{k} l_j}.
\end{align*}
We can rewrite $c$ as 
\begin{align*}
    c(a_{1j}) = \frac{ \sum_{j=1}^{m} (a_{1j} - a_{1,j+1}) \sum_{h=1}^{j} l'_{h}}{\sum_{j=1}^{m} (a_{1j} - a_{1,j+1}) \sum_{h=1}^{j} l_{h}}, \,\ \textit{where} \,\ a_{1,m+1}=0.
\end{align*}
Now we construct a new preference profile\footnote{In this Lemma, we include the $\epsilon$ terms for the completeness of the proof. In the rest of the paper, we omit these terms when they are needed for expressing a strict ordinal preference, as they do not affect the characterization of the incentive ratio bounds.} $b_{1,j} = 1 - (j-1) \epsilon, j=1,\cdots,k$ and $b_{1,j} = (m - j) \epsilon, j=k+1,\cdots,m$. Note that $b_{1,j}$'s are compatible with $\succ_1$. So, the truthful allocation $l_j$'s remain the same; by using the same strategy, the manipulation allocation $l'_j$'s are also kept the same. 
Moreover, 
\begin{align*}
    c(b_{1,j}) &= \frac{ \sum_{j=1}^{m} (b_{1,j} - b_{1,j+1}) \sum_{h=1}^{j} l'_{h}}{\sum_{j=1}^{m} (b_{1,j} - b_{1,j+1}) \sum_{h=1}^{j} l_{h}} \\
    & = \frac{\epsilon \sum_{j\neq k} \sum_{h=1}^{j} l'_h + (1-(m-2)\epsilon)\sum_{h=1}^{k} l'_h}{\epsilon \sum_{j\neq k} \sum_{h=1}^{j} l_h + (1-(m-2)\epsilon)\sum_{h=1}^{k} l_h} \\
    & \rightarrow c_{\max} \,\,\ (\epsilon \rightarrow 0)
\end{align*}
\end{proof}
The role of this lemma is similar to the \emph{quasi-combinatorial structure} \cite{DBLP:conf/sagt/Filos-RatsikasF014} and the \emph{zero-one principle}~\cite{DBLP:conf/stoc/AzarR04}. It substantiates the intuition that because the mechanism is ordinal, the worst-case incentive ratio is encountered on extreme valuation profiles.

By Lemma~\ref{dichotomous}, we can classify the items into two categories. They are, the set of items that agent 1 is interested in, i.e., $\overline{O} = \{ o_j | a_{1j} \,\ \textit{is close to} \,\ 1 \}$, and the set of items that agent 1 is not interested in, i.e., $\underline{O} = \{ o_j | a_{1j} \,\ \textit{is close to} \,\ 0 \}$. For ease of notation, assume that agent 1 is interested in the first $k$ items. So, $\overline{O} = \{ o_1, \cdots, o_k \}$. In the truthful profile, assume that agent 1 spends time $x_{j}$ on eating item $j$, then its utility is $u_1 \approx \sum_{j=1}^{k} x_j$. Note that agent 1 may not be able to get a positive fraction of each item in $\overline{O}$, so some of these $x_j$'s may be equal to 0.
Agent 1 may also eat some of the items in $\underline{O}$, but their contribution to its utility is negligible.

Given agents' ordinal preferences (no matter they are truthful or not), at any moment $t$, the following lemma compares the amount of each item that is not eaten up yet in two scenarios. In the \emph{normal scenario}, all agents eat items according to their reported ordinal preferences as normal;
in the \emph{pause scenario}, a set of agents is paused from time $t$ for some time while the other agents continue eating normally.  

\begin{lemma}\label{pause}
For any item, at any moment from time $t$ until it is eaten up, the remaining amount of the item in the pause scenario is no less than the remaining amount in the normal scenario.
\end{lemma}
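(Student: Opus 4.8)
The plan is to prove a stronger statement that holds for \emph{every} item simultaneously. Let $r_N(o,\tau)$ and $r_P(o,\tau)$ denote the remaining amount of item $o$ at time $\tau$ in the normal and pause scenarios respectively, and let me establish the invariant $r_P(o,\tau) \ge r_N(o,\tau)$ for all items $o$ and all $\tau \ge t$. Since the two scenarios run identically before $t$, they coincide at time $t$, so $r_P(o,t)=r_N(o,t)$ and the invariant holds with equality at the initial moment. Both remaining-amount functions are continuous and piecewise linear in $\tau$, with slope equal to minus the number of agents currently eating the item.

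The argument is a minimal-counterexample (``first violation'') argument. Suppose the invariant fails somewhere, and let $\tau^\ast$ be the infimum of times at which some item violates it. By continuity the invariant still holds weakly at $\tau^\ast$, and there is an item $o$ with $r_P(o,\tau^\ast)=r_N(o,\tau^\ast)$ whose pause-remaining is about to fall strictly below its normal-remaining just after $\tau^\ast$. Note this forces $r_P(o,\tau^\ast)=r_N(o,\tau^\ast)>0$, since a remaining amount cannot drop below $0$. For the pause-remaining to decrease faster, the consumption rate of $o$ immediately after $\tau^\ast$ must be strictly larger in the pause scenario, i.e. strictly more agents are eating $o$ in the pause scenario than in the normal scenario.

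The crux is to rule this out via the coupling supplied by the invariant itself. Consider any active (non-paused) agent $i$ eating $o$ in the pause scenario just after $\tau^\ast$. Then every item that $i$ prefers to $o$ is fully consumed in the pause scenario, so $r_P=0$ on each of them; the weak invariant $r_N \le r_P$ at $\tau^\ast$ forces $r_N=0$ on them too, so they are fully consumed in the normal scenario as well. Since $o$ is still available (its common remaining amount is positive), agent $i$'s most preferred available item in the normal scenario is also $o$, hence $i$ eats $o$ in the normal scenario. Agents that are paused contribute nothing in the pause scenario but may eat $o$ in the normal scenario. Therefore the number of agents eating $o$ in the normal scenario is at least the number eating it in the pause scenario, so the normal consumption rate of $o$ is at least the pause rate, contradicting the requirement that the pause-remaining drops faster. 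This establishes the invariant, and the lemma is its specialization to the single item under consideration.

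I expect the main obstacle to be the bookkeeping around the non-smooth dynamics: consumption rates are piecewise constant and several items may be exhausted simultaneously, so the selection of $\tau^\ast$ and the instantaneous-rate comparison must be phrased carefully (comparing right-hand slopes at $\tau^\ast$ and invoking only the \emph{weak} invariant $r_N \le r_P$ at $\tau^\ast$, not strict inequalities). The conceptual content, however, lies entirely in the coupling observation above: the invariant is precisely what transfers the statement ``all better items are already gone in the pause scenario'' to the normal scenario, which makes the set of active agents eating $o$ in the pause scenario a subset of those eating it in the normal scenario.
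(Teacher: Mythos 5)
Your proof is correct and follows essentially the same route as the paper's: both are first-violation (minimal counterexample) arguments that compare the number of agents eating the critical item in the two scenarios, and both resolve the comparison by observing that an agent eating that item in the pause scenario must have all more-preferred items exhausted there, which the invariant up to the violation time transfers to the normal scenario. Your write-up is somewhat more careful about the continuity and right-slope bookkeeping, but the key coupling idea is identical to the paper's.
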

\begin{proof}
We prove it by contradiction. Denote $t_{\inf}$ the earliest moment, after which there exists an item $j^*$, its remaining amount in the pause scenario is less than its remaining amount in the normal scenario. 
For a small enough $t_\delta>0$, at the moment $t_{\inf} + t_{\delta}$, the number of agents who are eating item $j^*$ in the pause scenario must be more than the number of agents who are eating item $j^*$ in the normal scenario. Otherwise, the amount of $j^*$ cannot be smaller in the pause scenario. Denote agent $2$ one of these agents who are eating item $j^*$ in the pause scenario but not in the normal scenario. Recall that every agent eats items in the order from its declared more preferred items to less preferred items. Then agent 2's eating status implies that there exists an item $j'$, which is more preferable to item $j^*$ in agent 2's report, such that at the moment $t_{\inf} + t_{\delta}$, item $j'$ is eaten up in the pause scenario but is not eaten up yet in the normal scenario. However, it contradicts our choice of item $j^*$ and the definition of $t_{\inf}$. 
\end{proof}

This lemma is critical as it is repetitively used in later proofs. It holds no matter how many agents are paused and for how long they are paused. It holds for any inputs to the PS mechanism, regardless the agents are truthful or not. With the presence of Lemma~\ref{pause}, we can better present the high-level ideas of the proofs.

Denote $u_1$ and $u'_1$ agent 1's utility in the truthful profile and the manipulation profile, respectively. Denote $T$ and $T'$ the moment by which all items in $\overline{O}$ are eaten up in the truthful profile and the manipulation profile, respectively. 
Denote $\tilde{T}$ and $\tilde{T}'$ the moment by which all items in $\overline{O}$ are eaten up while agent 1 is paused all the time (or, say, agent 1 is eliminated from the eating process) in the truthful profile and the manipulation profile, respectively.

When we ignore the $\epsilon$ terms for maintaining a strict ordering preference, obviously, $u_1 = T \le T' = u'_1$. By Lemma~\ref{pause}, $T' \le \tilde{T}'$. We also note that $\tilde{T}=\tilde{T}'$, as agent 1's reports have no impact on the eating process when agent 1 is eliminated. Therefore, we can prove our main result $u'_1 \le \frac{3}{2} u_1$ by showing that $\tilde{T} \le \frac{3}{2} T$.
This way, we circumvent the requisition of figuring out agent 1's best response strategies and only need to focus on how much more time it takes other agents to consume the items without the presence of agent 1. This approach is used in proving Theorem~\ref{case1} and Case 1 of Theorem~\ref{case2}.

We prove that our main theorem holds in all of the three possible cases when $m\leq n$, according to the value of $T$. That is, $0 < T < \frac{1}{2}$, $\frac{1}{2} \le T < \frac{2}{3}$, and $\frac{2}{3} \le T \leq 1$.

\begin{theorem}\label{case1}
When $0 < T < \frac{1}{2}$, the Incentive Ratio is upper bounded by $\frac{3}{2}$.
\end{theorem}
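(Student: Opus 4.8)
The plan is to establish the reduction target already isolated in the excerpt: since $u_1=T$, $u'_1=T'\le\tilde T'=\tilde T$, it suffices to prove $\tilde T\le\tfrac32 T$, i.e.\ that deleting the truthful agent $1$ from the eating process delays the moment all of $\overline O$ is consumed by at most a factor $\tfrac32$. Throughout I work in the dichotomous setting of Lemma~\ref{dichotomous}, so ``utility'' is just the total mass of $\overline O$ an agent eats, and I exploit the basic fact that a truthful agent $1$ eats \emph{only} items of $\overline O$ on the interval $[0,T]$: it strictly prefers every item of $\overline O$ to every item of $\underline O$, and some item of $\overline O$ survives until $T$.

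First I would set up an accounting of eaten mass. Let $A(t)$ and $\tilde A(t)$ denote the number of agents eating some item of $\overline O$ at time $t$ in the truthful and in the agent-$1$-eliminated processes, respectively; since each such agent eats at unit rate, $\int_0^{T}A(t)\,dt=k=\int_0^{\tilde T}\tilde A(t)\,dt$, the total mass of $\overline O$. Writing $A(t)=1+g(t)$ on $[0,T]$, where the $1$ is agent $1$'s contribution and $g(t)$ counts the other agents on $\overline O$, gives $\int_0^{T}g(t)\,dt=k-T$. Let $R$ be the mass of $\overline O$ still uneaten at time $T$ in the eliminated process; since that process eats $k-R$ of $\overline O$ by time $T$ and all of $\overline O$ by $\tilde T$, we have $R=\int_{T}^{\tilde T}\tilde A(t)\,dt$.

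The argument then rests on two claims. Claim~A: $R\le T$, i.e.\ the mass of $\overline O$ that removing agent $1$ ``leaves behind'' at time $T$ is at most the mass $T$ that agent $1$ itself had eaten. Claim~B: $\tilde A(t)\ge 2$ for every $t\in(T,\tilde T)$, i.e.\ while any item of $\overline O$ survives past $T$ in the eliminated process, at least two agents are eating $\overline O$. Granting these, $R=\int_{T}^{\tilde T}\tilde A\ge 2(\tilde T-T)$, so $\tilde T-T\le R/2\le T/2$ and $\tilde T\le\tfrac32 T$, which is exactly the desired bound; note both inequalities are tight on the natural three-eaters-on-one-item instance, confirming that this is the right accounting.

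The hard part will be Claim~A, and this is where $T<\tfrac12$ and Lemma~\ref{pause} enter. Claim~A is equivalent to saying that by time $T$ the other agents collectively eat at least as much of $\overline O$ in the eliminated process as they do in the truthful one (namely $k-T$). The danger is an indirect slowdown: removing agent $1$ makes every item more abundant by Lemma~\ref{pause}, so an ``other'' agent whose preferred blocker lies in $\underline O$ may clear that blocker later and hence migrate onto $\overline O$ later, depressing $\tilde A$ on $[0,T]$. The observation that tames this is that a truthful agent $1$ never eats any $\underline O$ item before $T$, so it never \emph{directly} accelerates a blocker; combined with Lemma~\ref{pause} this lets me bound each other agent's lost eating time against the extra abundance it enjoys, and the hypothesis $T<\tfrac12$ forces enough crowding on $\overline O$ (each unit-mass item is cleared before time $\tfrac12$, so on average more than two agents eat it) that the deficit cannot exceed $T$. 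Claim~B follows from the same crowding together with $n\ge m$: a single $\overline O$ item cleared so quickly in the truthful run must still attract a second eater in the eliminated run before it is exhausted.
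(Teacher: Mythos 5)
Your skeleton (reduce to $\tilde T\le\frac{3}{2}T$, then conclude via $\tilde T-T\le R/2\le T/2$) is internally consistent, and Claim~A is in fact true --- but your assessment of where the difficulty sits is inverted, and the step you treat as nearly automatic is exactly the gap. Claim~A needs none of the apparatus you invoke (neither $T<\frac{1}{2}$, nor crowding, nor the fact that agent 1 eats only $\overline{O}$ before $T$): since no agent is ever idle while some item remains, the total mass eaten by time $T$ is $nT$ in the truthful run and $(n-1)T$ in the eliminated run, so the total consumption deficit at time $T$ is exactly $T$; Lemma~\ref{pause} makes every per-item deficit nonnegative; and $\overline{O}$ has zero truthful remainder at time $T$. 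Hence $R\le T$ immediately.

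Claim~B, by contrast, is the crux, and ``follows from the same crowding together with $n\ge m$'' is not a proof. The crowding argument ($T<\frac{1}{2}$ forces at least three eaters, hence at least two non-agent-1 eaters, on each item of $\overline{O}^*$ at the moment it is exhausted) is a statement about the \emph{truthful} process. In the free-running eliminated process, at a time $t\in(T,\tilde T)$ those two co-eaters need not be eating $\overline{O}$ at all: by Lemma~\ref{pause} every item survives longer, so a co-eater can be stuck on an item of $\underline{O}$ that it prefers to the surviving $\overline{O}$ item --- an item exhausted before $T$ in the truthful run but still alive in the eliminated run through a cascade (agent 1's absence delays $o_1$, which delays $o_1$'s eaters, which delays the $\underline{O}$ items they move to next, and so on). This is precisely the indirect-slowdown danger you yourself flag when discussing Claim~A (where it is harmless, by the accounting above); for Claim~B it is the whole problem, and nothing in your sketch bounds it. Controlling it requires genuine machinery, e.g.\ a delay-propagation induction showing that no item's exhaustion can be postponed by more than the accumulated delays $\sum_j \delta_j\le T/2$ along agent 1's own eating sequence. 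The paper sidesteps the issue with a different device: it never analyzes the free-running eliminated process, but instead builds a \emph{paused} process (freeze everyone except the eaters of the current item of $\overline{O}^*$, one item at a time), which by Lemma~\ref{pause} can only delay exhaustion times, and which stays synchronized with the truthful run so that the two co-eaters are provably present exactly when needed, yielding $\delta_j\le l_j/2$ item by item. Without a proof of Claim~B, or a replacement for it in the spirit of the pause construction, your argument does not go through.
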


\begin{proof}
We will employ Lemma~\ref{pause} and we are interested in a subset of $\overline{O}$. Denote $\overline{O}^* \subseteq \overline{O}$ the set of items that agent 1 gets a positive fraction in the truthful profile. W.l.o.g., assume $\overline{O}^* = \{o_1,\cdots,o_{k^*}\}$. Denote $l_j$ the length of time that agent 1 spent on eating item $o_j, j=1,\cdots,k^*$. That being said, item $o_j$ is eaten up at the moment $\sum_{h=1}^{j}l_h$.
Therefore, we have $T=\sum^{k^*}_{j=1}l_j$.

Since $0 < T < \frac{1}{2}$, and it takes at least time $\frac{1}{2}$ for two agents to eat up an item, we know that at the moment $\sum_{h=1}^{j}l_h$, there are at least three agents eating item $o_j, j=1,\cdots,k^*$. So, apart from agent 1, there are at least two other agents eating each of these items $o_j$ at the moment they are eaten up.

Now let us consider the following process in which agent 1 is eliminated. From the start to $l_1$, all agents eat items according to their preferences. At time $l_1$, pause all other agents but the agents who are eating item $o_1$. Since agent 1 is eliminated, it will take them some extra time $\delta_1$ to eat up item $o_1$. Agent 1 is absent for $l_1$ time so far and there are at least two of these agents, so $\delta_1 \le \frac{l_1}{2}$. At the moment $l_1 + \delta_1$, resume all agents' eating procedure, and pause all agents but the agents who are eating item $o_2$ at the moment $l_1 + \delta_1 + l_2$. For the same reason, it will take these agents an extra $\delta_2 \le \frac{l_2}{2}$ time to eat up the fraction due to agent 1's absence. Repeating this process until all items in $\overline{O}^*$ are eaten up; at this moment all items in $\overline{O} / \overline{O}^*$ are eaten up as well. 
Therefor, we have that 
$\tilde{T} \le \sum_{j=1}^{k^*} (l_j + \delta_j) \le
\sum_{j=1}^{k^*} (l_j + \frac{l_j}{2})
\le \frac{3}{2} T.$
\end{proof}


The second case is the most challenging one. It requires a fine-grained analysis. 

\begin{theorem}\label{case2}
When $\frac{1}{2} \le T < \frac{2}{3}$, the Incentive Ratio is upper bounded by 3/2.
\end{theorem}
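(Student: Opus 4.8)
The plan is to push the reduction established just before the theorem as far as it will go, and to fall back on a finer argument only where it breaks. Recall that $u_1=T$, that the best manipulation satisfies $u_1'=T'$, and that $T'\le\tilde T'=\tilde T$, so it suffices to control the time $\tilde T$ needed for the remaining agents to consume all of $\overline O$ once agent 1 is removed. Running the per-item pause argument of Theorem~\ref{case1}, for each item $o_j\in\overline O^*$ completed at time $\sum_{h\le j}l_h$ with $m_j$ other agents eating it at that instant, the extra time caused by agent 1's absence is at most $l_j/m_j$. When every $m_j\ge 2$ this already gives $\tilde T\le\sum_j(l_j+l_j/2)=\tfrac32 T$, which I expect to be the first (easy) sub-case. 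The genuine obstruction in the regime $T\ge\tfrac12$ is the possible existence of \emph{thin} items, those shared at completion by agent 1 and exactly one other agent ($m_j=1$); for such an item the pause bound degrades to $l_j$, i.e.\ a factor of $2$, which is too weak.

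To handle thin items I would abandon the full elimination of agent 1 and instead bound $T'$ inside the manipulation profile itself, where agent 1 is still eating. The guiding observation is that the lone co-eater $p_j$ of a thin item $o_j$ need not abandon $o_j$ when agent 1 re-reports: by the availability monotonicity of Lemma~\ref{pause}, the only thing that can divert $p_j$ away from $o_j$ is the survival of some item it strictly prefers. If that item lies in $\overline O$, the diversion still helps consume $\overline O$ and costs nothing; the dangerous case is a diversion onto an item of $\underline O$, which slows the consumption of $\overline O$ and is precisely what must be ruled out or charged against savings elsewhere. The aim is to certify that, for thin items, two eaters are effectively present over the relevant interval in the manipulation profile, recovering the missing factor without paying the full $l_j$.

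The residual slack would then be closed by a global amount-counting argument that exploits $T\ge\tfrac12$: the total mass of $\overline O$ is $k$, of which agent 1 consumes exactly $T$ truthfully and the other agents consume $k-T$, and the hypothesis $n\ge m$ guarantees that enough agents are ever present to keep the relevant rates from collapsing. Combining this conservation law with the per-interval rate estimates above should bound the aggregate extra time by $\tfrac12 T$. I expect the main obstacle to be exactly the re-routing phenomenon just flagged: because an agent's current bite depends on its whole set of currently-available items, changing (or removing) agent 1's report can make another agent switch between an $\overline O$ item and an $\underline O$ item, so the consumption rate of $\overline O$ is not monotone in agent 1's behaviour. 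Taming this non-monotonicity---showing that every such switch either leaves the $\overline O$-rate unharmed or is compensated elsewhere---is the delicate, case-laden heart of the argument and is where the in-depth observations must do their work.
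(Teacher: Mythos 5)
Your setup is sound and matches the paper's opening moves: the sub-case in which every item of $\overline O^*$ has at least two co-eaters besides agent 1 is handled exactly as in Theorem~\ref{case1}, and the obstruction is indeed an item finished by agent 1 and a single other agent. But from there your proposal stops being a proof and becomes a research plan: the two steps that would close the thin-item case --- ``certify that two eaters are effectively present'' and ``combining this conservation law \ldots should bound the aggregate extra time by $\tfrac12 T$'' --- are precisely the statements that need proof, and you give no argument for either. You are also missing the structural facts that make the hard case tractable. First, there is at most \emph{one} thin item $o_{k'}$ (two agents need time at least $\tfrac12$ to finish an item, and $T<\tfrac23$), so one can write $T=t_1+t_2+t_3$ around it and derive the timing bounds $t_1<\tfrac13$, $t_3<\tfrac16$, $t_2>2t_3$; these quantitative constraints, not a mass-conservation law, are what power the estimate. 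Second, every item finished after $o_{k'}$ (the set $O_3$) has at least two eaters \emph{other than agents 1 and 2} at its completion (Lemma~\ref{T3twoagents}); this is the paper's substitute for your unproven ``two eaters are effectively present'' claim, and it is proved by a short case analysis using $t_3<\tfrac16$, not by monotonicity.

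The paper then splits on exactly the re-routing phenomenon you flag, but resolves it differently in the two branches. If, after agent 1's elimination, some agent diverts from $O_3$ to help finish $o_{k'}$ before the calibrated moment $\tfrac32 t_1+t_2+2t_3$, a three-category time accounting (delay on $O_3$ caused by agent 2, by the diverting agent, and the residual eating time) yields $\tilde T\le \tfrac32 T$ directly. If no agent diverts, the elimination argument is abandoned altogether --- this is the step your plan has no counterpart for: one bounds the manipulator's utility itself via $u_1'\le\bigl(\tfrac32 t_1+t_2+2t_3\bigr)+\tfrac{t_2-2t_3}{2}=\tfrac32 t_1+\tfrac32 t_2+t_3\le\tfrac32 T$, using Lemma~\ref{pause} to argue that at the calibrated moment the only item of $\overline O$ that can survive is $o_{k'}$, with at most $t_2-2t_3$ of it remaining and agent 2 still eating it, so agent 1 can capture at most half of that. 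Your instinct to ``bound $T'$ inside the manipulation profile itself'' gestures in this direction, but without the calibrated moment, the uniqueness of the thin item, and Lemma~\ref{T3twoagents}, the non-monotonicity you correctly identify as the heart of the matter is never actually tamed.
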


In this case, for every item $o_j \in \overline{O}^*$, if at least two other agents are eating the item with agent 1 at the moment $\sum^j_{h=1}l_h$, then the theorem can be proved in the same way as Theorem~\ref{case1}. Therefore, assume that there exists an item that only one other agent is eating it with agent 1 at the moment it is eaten up. Note that there could only exist one such item, as it takes at least time $\frac{1}{2}$ for two agents to eat up one item.
Denote this item by $o_{k'}$ and the other agent by agent 2. For ease of notation, let $t_1=l_1 + \cdots + l_{k'-1}$,  $t_2=l_{k'}$, and $t_3=l_{k'+1} + \cdots + l_{k^*}$. Then $t_1+t_2+t_3=T$.

\begin{lemma}
$t_1 + t_2 \ge \frac{1}{2}$, $t_3 < \frac{1}{6}$, $t_1 < \frac{1}{3}$, and $t_2 > 2 t_3$.
\end{lemma}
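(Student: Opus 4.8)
The plan is to derive all four inequalities from a single structural identity about the special item $o_{k'}$, and then combine that identity with the case hypothesis $T < \frac{2}{3}$; essentially no further case analysis is needed.

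First I would establish the basic monotonicity property of the eating process: once an agent begins eating an item $X$, it keeps eating $X$ until $X$ is exhausted. The reason is that every item the agent prefers to $X$ was already gone at the instant it switched onto $X$ (otherwise it would have eaten that item instead), and exhausted items never reappear, so $X$ remains the agent's most preferred available item until $X$ itself is gone. Consequently the set of agents eating any fixed item is non-decreasing in time, hence maximal at the instant the item is exhausted. Since the eater set of $o_{k'}$ at its exhaustion time $t_1+t_2$ is exactly $\{1,2\}$ (agents $1$ and $2$) by the standing assumption of this case, it follows that no agent other than $1$ and $2$ ever eats $o_{k'}$.

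Next I would read off the consequences for $o_{k'}$. Agent $1$ eats $o_{k'}$ throughout $[t_1, t_1+t_2]$, consuming amount $t_2$, while agent $2$ eats it continuously over $[s_2, t_1+t_2]$ for some start time $s_2 \ge 0$, consuming $t_1 + t_2 - s_2$. As these two amounts exhaust the whole unit, $t_2 + (t_1 + t_2 - s_2) = 1$, that is, $t_1 + 2t_2 = 1 + s_2 \ge 1$, which is the key identity. Moreover, since at every instant at most two agents eat $o_{k'}$ (the eater set never exceeds its final value $\{1,2\}$), its unit mass is consumed at rate at most $2$, so it cannot be exhausted before time $\frac{1}{2}$; this gives $t_1 + t_2 \ge \frac{1}{2}$, the first claim. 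Combining with $T = t_1 + t_2 + t_3 < \frac{2}{3}$ yields $t_3 = T - (t_1+t_2) < \frac{2}{3} - \frac{1}{2} = \frac{1}{6}$, the second claim.

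Finally, subtracting the total-time identity from the key identity gives $t_2 - t_3 = (t_1 + 2t_2) - T \ge 1 - T > \frac{1}{3}$. Since $t_3 < \frac{1}{6} < \frac{1}{3}$, this yields $t_2 > \frac{1}{3} + t_3 > 2t_3$, the fourth claim, and $t_1 = T - t_2 - t_3 < \frac{2}{3} - (\frac{1}{3} + t_3) - t_3 = \frac{1}{3} - 2t_3 \le \frac{1}{3}$, the third. I expect the main obstacle to be the monotonicity step and its careful application, namely pinning down that agent $2$ is the \emph{unique} other eater of $o_{k'}$ and that both agents eat it over contiguous intervals, so that the mass identity $t_1 + 2t_2 = 1 + s_2$ is exact rather than an inequality in the wrong direction; once that identity is secured, the remaining three bounds are elementary arithmetic driven by $T < \frac{2}{3}$, and the $\ge 3$--eater structure of the other items is not needed here.
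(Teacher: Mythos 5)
Your proof is correct and takes essentially the same route as the paper's: the rate-at-most-two argument gives $t_1+t_2\ge\frac12$, the mass-conservation bound $t_1+2t_2\ge 1$ (which you sharpen into the exact identity $t_1+2t_2=1+s_2$) combines with $T<\frac23$ to give the remaining inequalities by the same elementary arithmetic. The only addition is that you explicitly justify the eater-set monotonicity ensuring agents $1$ and $2$ are the \emph{only} agents ever eating $o_{k'}$, a fact the paper uses implicitly.
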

\begin{proof}
Since only agents 1 and 2 are eating item $o_{k'}$, it must take at least time $\frac{1}{2}$ for them to eat it up. So $t_1 + t_2 \ge \frac{1}{2}$. In addition, $T < \frac{2}{3}$, so $t_3 = T-(t_1+t_2) < \frac{1}{6}$.
On the one hand, agent 1 has been eating item $o_{k'}$ for time $t_2$, and agent 2, even if it start eating item $o_{k'}$ from the beginning, has been eating it for time $t_1+t_2$, we know that $t_2 + (t_1 + t_2) \ge 1$; on the other hand, $t_1+t_2\le T<\frac{2}{3}$, we conclude that $t_2 > \frac{1}{3}$ and $t_1 < \frac{1}{3}$. Therefore, we have $t_2 > \frac{1}{3} \ge 2t_3$.
\end{proof}

Denote $O_1$ the set of items that are eaten up on or before time $t_1$ in the truthful profile, $O_2$ the set of items that are eaten up in time interval $(t_1, t_1+t_2]$ in the truthful profile, and $O_3$ the set of items that are eaten up in time interval $(t_1+t_2, T]$ in the truthful profile. Note that these three sets contain all items that agent 1 is interested in, i.e., $\overline{O} \subseteq O_1 \cup O_2 \cup O_3$.

Since $t_1 < \frac{1}{3}$, in time interval $[0, t_1]$, there are at least three agents eating any item that agent 1 was eating, so the analysis for this interval is similar to that in Theorem~\ref{case1}.

\begin{corollary}\label{T1}
If we eliminate agent 1, all items in $O_1$ would be eaten up within time $\frac{3}{2} t_1$. 
\end{corollary}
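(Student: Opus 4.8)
The plan is to replay the argument of Theorem~\ref{case1} almost verbatim, but restricted to the time window $[0,t_1]$ instead of $[0,T]$. The only property of the window that Theorem~\ref{case1} actually exploited is that every item consumed inside it is finished strictly before time $\frac{1}{2}$; here we have the even stronger bound $t_1<\frac{1}{3}<\frac{1}{2}$. So first I would record the eater-count observation: for each item finished at some moment $\tau\le t_1$, the set of agents eating it at that moment has size at least three, since the set of agents eating a fixed item is non-decreasing until the item disappears, and if at most two agents ever ate an item it would need time at least $\frac{1}{2}$ to be cleared. In particular, for each item $o_j$ that agent $1$ personally eats during $[0,t_1]$, at the moment $\sum_{h\le j} l_h$ when $o_j$ vanishes there are at least two agents other than agent $1$ still eating it.

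Next I would build the same catch-up process. List the items agent $1$ eats during $[0,t_1]$ as $o_1,\dots,o_{k'-1}$ in order of consumption, with $l_j$ the time agent $1$ spends on $o_j$, so that $\sum_{j<k'} l_j=t_1$. Run the mechanism with agent $1$ deleted, coupling it to the truthful profile phase by phase: up to the truthful finishing time of $o_1$ the two runs are identical except that agent $1$ is idle, so at that instant the deleted run has exactly agent $1$'s share of $o_1$, namely $l_1$, left over; I then pause every agent not currently eating $o_1$ (legitimate by Lemma~\ref{pause}, which only postpones consumptions) and let the $\ge 2$ surviving eaters clear that leftover in extra time $\delta_1\le \frac{l_1}{2}$. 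At that point the deleted-and-paused configuration coincides with the truthful configuration just after $o_1$ was finished, so the induction restarts on $o_2$, and so on. Summing the phases yields a finishing time at most $\sum_{j<k'}\bigl(l_j+\delta_j\bigr)\le \frac{3}{2}\sum_{j<k'} l_j=\frac{3}{2}t_1$.

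The step I expect to carry the weight is verifying that this bound covers \emph{all} of $O_1$, not merely the items agent $1$ touches: $O_1$ also contains items finished by $t_1$ that agent $1$ never ate, and deleting agent $1$ can still delay these indirectly, since other agents linger on their own items longer and join later. The coupling resolves this cleanly, because at the end of each phase the paused configuration is shown to be \emph{identical} to the truthful configuration at the corresponding truthful time; hence everything the truthful run had consumed by $t_1$ is consumed in the paused run by time $\frac{3}{2}t_1$, and a fortiori in the genuine agent-$1$-free run, which runs no slower. The two facts I would state most carefully are the monotonicity of the eater set (needed so that the two surviving eaters are guaranteed present exactly at the finishing moment, not merely at some earlier instant) and the direction of the Lemma~\ref{pause} inequality (pausing must slow the process, so that a bound for the paused run is a valid bound for the unpaused one).
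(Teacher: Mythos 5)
Your proof is correct and takes essentially the same approach as the paper, whose entire justification is that the analysis of Theorem~\ref{case1} applies to the interval $[0,t_1]$ because $t_1<\frac{1}{3}$ guarantees at least three eaters (hence two besides agent 1) on each item agent 1 touches there. Your phase-by-phase coupling, ending with the Lemma~\ref{pause} comparison to the unpaused agent-1-free run, is exactly the mechanism behind that argument, and it additionally makes explicit the point the paper leaves implicit --- that items of $O_1$ never eaten by agent 1 are also finished by time $\frac{3}{2}t_1$, since the paused configuration at each phase end coincides with the truthful configuration at the corresponding time.
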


For the set $O_3$, we would not obtain the same $\frac{3}{2}$ bound straightforwardly, but are able to obtain a slightly looser bound, which will be used together with some other approaches for handling $O_2$ to obtain an overall $\frac{3}{2}$ bound. We first show the following lemma.

\begin{lemma}\label{T3twoagents}
In the normal scenario, for each item in $O_3$, at the moment that it is finished, there are at least two agents other than agents 1 and 2 who are eating the item.
\end{lemma}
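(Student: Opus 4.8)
The plan is to argue by contradiction and reduce everything to a mass-counting estimate. Fix an item $o \in O_3$ and let $\tau \in (t_1+t_2, T]$ be the moment it is eaten up in the truthful profile. Suppose, toward a contradiction, that at time $\tau$ at most one agent other than agents $1$ and $2$ is eating $o$. Since item $o$ has total mass $1$ and is exhausted exactly at $\tau$, the total amount consumed over $[0,\tau]$ equals $1$; I will show that under the contradiction hypothesis this total is strictly less than $1$.

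The first key step is a monotonicity observation about the simultaneous eating algorithm: once an agent begins eating $o$, it keeps eating $o$ until $o$ is exhausted. Indeed, an agent eats $o$ precisely when $o$ is its most preferred item among those not yet finished; since the set of unfinished items only shrinks over time and $o$ stays available until $\tau$, no item the agent prefers to $o$ can reappear, so the agent remains on $o$. Hence the set of agents eating $o$ is non-decreasing up to $\tau$, and the agents eating $o$ at the finishing moment $\tau$ are exactly all agents that ever touch $o$. Under the contradiction hypothesis this means at most one agent outside $\{1,2\}$ --- call it agent $3$ --- ever eats $o$.

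The second step is to bound how much of $o$ agents $1$ and $2$ can consume. Because $o \in O_3$, it is not finished until after $t_1+t_2$, whereas before $t_1+t_2$ agent $1$ is eating the items $o_1,\dots,o_{k'}$, all of which are exhausted by $t_1+t_2$ and hence distinct from $o$; so agent $1$ can eat $o$ only during $(t_1+t_2,\tau]$, contributing at most $\tau-(t_1+t_2)$. For agent $2$, note that it eats $o_{k'}$ right up until $t_1+t_2$ while $o$ is still available, so by revealed preference it ranks $o_{k'}$ above $o$ and likewise can eat $o$ only during $(t_1+t_2,\tau]$, again contributing at most $\tau-(t_1+t_2)$. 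Agent $3$ contributes at most $\tau$. Summing the three contributions and using $\tau\le T<\tfrac{2}{3}$ together with the established bound $t_1+t_2\ge\tfrac{1}{2}$ gives
\begin{align*}
1 \;\le\; 2\bigl(\tau-(t_1+t_2)\bigr)+\tau \;=\; 3\tau-2(t_1+t_2) \;\le\; 3T-1 \;<\; 1,
\end{align*}
a contradiction. Therefore at least two agents other than $1$ and $2$ must be eating $o$ at the moment it is finished.

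I expect the main obstacle to be the bookkeeping of the second step rather than the final inequality, which is immediate once the three per-agent bounds are in place. The delicate point is justifying that neither agent $1$ nor agent $2$ consumes any of $o$ before $t_1+t_2$: for agent $1$ this rests on the order in which it eats the items of $\overline{O}^*$, and for agent $2$ it rests on the revealed-preference comparison that it chose $o_{k'}$ over the still-available $o$. The monotonicity observation is what makes these single-agent bounds suffice, since it rules out the scenario in which many distinct outside agents each eat a short segment of $o$; without it, the clean three-agent decomposition would break down and the outside contribution could no longer be charged to a single agent.
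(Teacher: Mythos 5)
Your proof is correct, and it takes a genuinely different route from the paper's. The paper argues forward by a three-way case analysis on how many agents are eating the item at the moment $t_1+t_2$ (zero, one, or at least two), and in the first two cases uses a rate argument (remaining mass divided by the remaining time $t_3<\tfrac16$ forces at least six, respectively four, concurrent eaters) to conclude that two eaters besides agents 1 and 2 must be present at the finish; in the third case it simply notes agents 1 and 2 are occupied with $o_{k'}$. You instead run a single unified contradiction: the monotonicity of the eater set (which the paper uses only implicitly) lets you identify the agents eating at the finish with all agents that ever touch the item, and then the revealed-preference step --- agent 2 was eating $o_{k'}$ while $o$ was available, so $o_{k'}\succ_2 o$, and agent 1 is occupied with $o_1,\dots,o_{k'}$ --- pins both of their consumptions to the window $(t_1+t_2,\tau]$, giving the mass bound $3\tau-2(t_1+t_2)\le 3T-1<1$. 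What each buys: your version avoids the case split entirely and sidesteps the paper's somewhat garbled arithmetic in its second case, at the cost of needing the explicit revealed-preference bookkeeping for agents 1 and 2; the paper's version never needs to reason about what agents 1 and 2 prefer (it conservatively allows them among the eaters and compensates by demanding a larger head count), and its forward analysis yields slightly stronger per-case information (e.g., at least four other eaters when the item is untouched at $t_1+t_2$), though that extra strength is never used downstream.
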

\begin{proof}
For each item $o_j \in O_3$, we prove the lemma in three possible cases.
\begin{itemize}
\item At the moment $t_1 + t_2$, no agent is eating item $o_j$. In this case, because $t_3 < \frac{1}{6}$ and the item $o_j$ is eaten up within the interval $[t_1+t_2, t_1+t_2+t_3]$, there must be at least six agents eating the item. Amongst these six agents, even if two of them are agents 1 and 2, there are another four agents. 
\item At the moment $t_1 + t_2$, only one agent is eating item $o_j$. In this case, even if this agent eats item $o_j$ from the start, there are at least $1-(t_1+t_2)$ fraction of this item remaining, and it will be eaten up before the moment $t_1+t_2+t_3$, by $\frac{1-(t_1+t_2)}{t_3} = 1 + \frac{1-t}{t_3} > 1+ \frac{1-2/3}{1/6} = 3$, we know that there are at least four agents eating the item. Two of them might be agents 1 and 2. Even though, there are at least another two agents. 
\item At the moment $t_1 + t_2$, at least two agents are eating item $o_j$. Since agents 1 and 2 are eating item $o_{k'}$,  these must be two other agents. 
\end{itemize}
\end{proof}

If two agents were absent for some time, it will take another two agents the same amount of time to eat up the amount of items left over due to their absence. 
Therefore, a direct consequence of the above lemma is the following bound.

\begin{corollary}\label{T3}
After eliminating agent 1 from eating items in $O_1$, if we eliminate agents 1 and 2 from the moment $\frac{3}{2} t_1 + t_2$, all items in $O_3$ will be eaten up by at most an extra $t_3$ time.
\end{corollary}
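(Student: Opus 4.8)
The plan is to read Corollary~\ref{T3} as a direct application of the compensation principle stated just above it: removing two eaters for a stretch of time produces a shortfall that two \emph{other} eaters, if present, make up in exactly the same amount of time. Writing $s := \frac{3}{2}t_1 + t_2$ for the instant at which agents $1$ and $2$ are eliminated, I would prove that every item of $O_3$ is consumed by $s + t_3$, which is the claimed ``extra $t_3$''. The first ingredient is a bound on the shortfall. In the truthful profile the items of $O_3$ are precisely those finished in the window $(t_1+t_2,\,T]$, whose length is $t_3$; since each agent eats at unit speed, the total amount of $O_3$ that agents $1$ and $2$ could ever have eaten there is at most $2t_3$, and this is the entire deficit their removal can create on $O_3$.

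The second ingredient is the supply of compensators furnished by Lemma~\ref{T3twoagents}: at the moment each $O_3$ item is finished there are at least two agents other than $1$ and $2$ eating it. I would combine this with Lemma~\ref{pause}, whose monotonicity guarantees that eliminating agents $1$ and $2$ can only slow consumption, so the remaining amount of every item after $s$ dominates its normal counterpart and the same compensating agents are still at work on those items. Because these compensators operate at combined speed at least $2$, they absorb a deficit of at most $2t_3$ in at most $t_3$ additional time; charging the per-item shortfall to the per-item compensators and summing over $O_3$ yields the bound. The timeline shift caused by $O_1$ finishing at $\frac{3}{2}t_1$ instead of $t_1$ is harmless here, since the corollary measures the extra time relative to $s$, so it is absorbed into the bookkeeping rather than the estimate.

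The hard part will be upgrading Lemma~\ref{T3twoagents} from ``two other agents at the finishing \emph{instant} of each item'' to ``at least two other agents available \emph{throughout} the compensation period.'' A single compensator may, on finishing one $O_3$ item, move to an item outside $O_3$, so I cannot simply fix one pair of helpers for the whole window; instead I would argue item by item, or through a total-volume accounting over all of $O_3$, that as long as any $O_3$ item remains there are at least two agents other than $1$ and $2$ still eating inside $O_3$, so the compensating speed never falls below $2$ before $O_3$ is exhausted. Lemma~\ref{pause} is again the right instrument, letting me compare the paused process with the normal one item by item and rule out the pathological case in which the growing deficit outruns the available compensators.
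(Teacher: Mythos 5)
Your proposal takes essentially the same route as the paper: the paper derives Corollary~\ref{T3} as an immediate consequence of Lemma~\ref{T3twoagents} via exactly the compensation principle you state --- the absence of agents 1 and 2 creates a deficit of at most $2t_3$ on the items of $O_3$, and the at-least-two other eaters per item absorb it at combined speed at least $2$, hence in at most $t_3$ extra time. The persistence issue you flag as the ``hard part'' (compensators staying available throughout the window, not just at each finishing instant) is simply not addressed in the paper, which asserts the corollary in two sentences, so your more careful item-by-item accounting via Lemma~\ref{pause} only fills in detail the paper leaves implicit.
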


We now turn to prove Theorem~\ref{case2} by combining these intermediate results and an analysis on item $o_{k'}$ and the set $O_3$. According to the high-level idea of our proofs described before, we will eliminate agent 1 from eating item $o_{k'}$. This will delay the moment that item $o_{k'}$ is eaten up. More importantly, it will lead to two possible consequences. 
 \\

\noindent\textit{Proof of} Theorem~\ref{case2}: \\
{\bf Case 1:} The process of eating item $o_{k'}$ is extended, so some agents who eat items in $O_3$ will not continue eating their next item in $O_3$ and will take the chance to eat item $o_{k'}$ before the moment $\frac{3}{2} t_1 + t_2 + 2t_3$. 

In the truthful profile, denote $\{s_1,\cdots,s_j\} \subseteq O_3$ the set of items that either agent 1 or agent 2 gets a positive fraction.   
For $h=1,\cdots, j$, denote $c_h$ and $d_h$ the fraction of item $s_h$ that agents 1 and 2 get, respectively.

In the manipulation profile (agent 1 is eliminated, agent 2 is eating item $o_{k'}$), denote $z_h$ the moment at which item $s_h$ is eaten up; denote agent 3 the first agent who comes to eat item $o_{k'}$ before the moment $\frac{3}{2} t_1 + t_2 + 2t_3$. In case there are multiple agents come to agent item $o_{k'}$ at the same time, pick one of them as agent 3 randomly.
Denote $\frac{3}{2} t_1 + t_2 + x$ the moment agent 3 starts to eat item $o_{k'}$, where $0 \le x \le 2t_3$.

Assume $1.5t_1 + t_2 +x \in (z_{w-1},z_w], w\le j$.
At the moment $\frac{3}{2} t_1 + t_2 + x$, compare the truthful profile and the manipulation profile, due to the absence of agent 1, item $o_{k'}$ is not eaten up in time. So both agents 1 and 2 have not started eating items in $O_3$ at time $\frac{3}{2}t_1 + t_2$. So there is a delay of $\frac{1}{2} t_1 + \frac{\sum_{j=1}^{w-1} c_j + \sum_{j=1}^{w-1} d_j}{2}$. The $\frac{1}{2}$ in the second term is due to Lemma~\ref{T3twoagents}.

Now let agents 2 and 3 eat item $o_{k'}$ and pause all other agents until the item is eaten up. This will take agents 2 and 3 time $\frac{t_2 - x}{2}$, here $t_2$ is due to the absence of agent 1, $x$ is how much agent 2 has eaten, the denominator 2 is due to Lemma \ref{T3twoagents}.

Now at time $\frac{3t_1}{2} + t_2 + x+ \frac{t_2-x}{2}$, resume all agents except agent 1.  
We calculate the amount of time after this moment when the items in $O_3$ will be eaten up in the manipulation profile.
It consists of three categories. 
In the manipulation profile, when we pause the all other agents but the agents who are eating item in $\overline{O} \cap O_3$, this moment belongs to the first category.
The time in the first category is at most
$\frac{\sum_{h=w}^j d_h}{2}$. Note that $t_3=\sum_{h=1}^j d_h$, the time in the first category is at most $$\frac{t_3-\sum_{h=1}^{w-1} d_h}{2}.$$

The second category is associated with agent 2. At time $\frac{3t_1}{2} + t_2 + x+\frac{t_2-x}{2}$, agent 2 will begin to eat an item denoted by $o_q$. Then in the truthful profile, at the moment $t_1+t_2+x-\frac{\sum_{j=1}^{w-1} c_j + \sum_{j=1}^{w-1} d_j}{2}$ agent 2 is eating item $o_q$.  
If $o_q\notin O_3$, it implies $o_q$ will not be eaten up at moment $t_1+t_2+t_3$ in the truthful profile. Agent 2 will not affect the moment items in $O_3$ been eaten up in the manipulation profile anymore.
The second category would be empty.
If $o_q\in O_3$, in the manipulation profile, we would pause all other agents but the agents who are eating item $o_q$.
This pause moment belongs to the second category. 
In the manipulation profile, before the moment $1.5t_1+t_2+x$, the total time agent 2 has been absent for eating item in $O_3$  
is $x-\frac{\sum^{w-1}_{h=1}{(c_h+d_h)}}{2}$. These items are $o_1,...,o_{w_1}$ and $o_q$. The length of time agent 2 has been absent for eating item $o_q$ is
$x-\frac{3\sum^{w-1}_{h=1}c_h+\sum^{w-1}_{h=1}d_h}{2}$. In the pause period, it will take agents who are eating $o_q$ at most $$\frac{x}{2}-\frac{3\sum^{w-1}_{h=1}c_h+\sum^{w-1}_{h=1}d_h}{4}$$ time, which is the upper bound of the time in the second category.

The other moments before $O_3$ have been eaten up belongs to third category. In this category, manipulation profile is identical to the truthful profile except agent 1 is not eating. The time in this category would be $$t_3-(x-\frac{\sum_{j=1}^{w-1} c_j + \sum_{j=1}^{w-1} d_j}{2}).$$

Considering all three categories, we upper bound the time when all items in $O_1 \cup O_2 \cup O_3$ are eaten up as follows
\begin{eqnarray*}
\tilde{T}&\leq& [\frac{3t_1}{2} + t_2 +x+ \frac{t_2-x}{2}]
    +[\frac{t_3-\sum_{h=1}^{w-1} d_h}{2}]\\
    &&
    +[\frac{x}{2}-\frac{3\sum_{h=1}^{w-1}c_h}{4}-\frac{\sum_{h=1}^{w-1}d_h}{4}]\\
    &&+[t_3-x+ \frac{c_1+...+c_{w-1}+d_1+...+d_{w-1}}{2}]\\
    &=&1.5t_1+1.5t_2+1.5t_3-
    \frac{\sum_{h=1}^{w-1}(c_h+d_h)}{4}\\
    &\leq& 1.5(t_1+t_2+t_3)
\end{eqnarray*}

{\bf Case 2:} 
Before time $1.5t_1+t_2+2t_3$, the agents who eat items in $O_3$ will continue eating their next items in $O_3$ as they are more favorable than $o_{k'}$.
In this case, we are not going to use the high-level idea presented before to show $\tilde{T} \le \frac{3}{2} T$. Instead, we will characterize an upper bound of agent 1's utility when it uses the best response strategy. This upper bound will be partitioned into two quantities; one of which will be bounded using Lemma~\ref{pause}. 

By using the best response strategy, 
\begin{align*}
    u'_1 = u'_1|_{\le \frac{3}{2} t_1 + t_2 + 2t_3} + u'_1|_{> \frac{3}{2} t_1 + t_2 + 2t_3},
\end{align*}
where $u'_1|_{\le \frac{3}{2} t_1 + t_2 + 2t_3}$ and $u'_1|_{> \frac{3}{2} t_1 + t_2 + 2t_3}$ denote agent 1's utility gained before and after the moment $\frac{3}{2} t_1 + t_2 + 2t_3$, respectively. Obviously, $u'_1|_{\le \frac{3}{2} t_1 + t_2 + 2t_3} \le \frac{3}{2} t_1 + t_2 + 2t_3$, where the equality holds if agent 1 eats items in $\overline{O}$ from the start to time $\frac{3}{2} t_1 + t_2 + 2t_3$. Next, we upper bound $u'_1|_{> \frac{3}{2} t_1 + t_2 + 2t_3}$.

In fact, the moment $\frac{3}{2} t_1 + t_2 + 2t_3$ is calibrated as it is the time needed to eat up items in $O_1$ and $O_3$, plus agent 2 has been eating item $o_{k'}$ alone for $t_2$ time, when agent 1 is eliminated. Now let us check out the items that would have been left over in the manipulation profile at the moment $\frac{3}{2} t_1 + t_2 + 2t_3$. 
By Lemma~\ref{pause}, we notice that the only item in $\overline{O}$ would possibly be still not eaten up yet, is $o_{k'}$. Due to the absence of agent 1, it has $t_2$ left, but agent 2 has been eaten it for an additional $2t_3$ time, so there would be $t_2 - 2t_3$ of item $o_{k'}$ left. 

If $o_{k'}$ is not eaten up yet, agent 2 must be eating it at the moment $1.5t_1+t_2+2t_3$. To gain more utility, the optimal strategy for agent 1 is to eat item $o_{k'}$ after this moment. Agent 1 can get at most $\frac{t_2 - 2t_3}{2}$ fraction for the existence of agent 2. Therefore, 
\begin{align*}
    u'_1 &= u'_1|_{\le \frac{3}{2} t_1 + t_2 + 2t_3} + u'_1|_{> \frac{3}{2} t_1 + t_2 + 2t_3} \\
    &\le \frac{3}{2} t_1 + t_2 + 2t_3 + \frac{t_2 - 2t_3}{2} \\
    & = \frac{3}{2} t_1 + \frac{3}{2} t_2 + t_3 \le \frac{3}{2} T.
\end{align*}
\qed


At last, we have the third case.

\begin{theorem}\label{case3}
When $\frac{2}{3} \le T \le 1$, the Incentive Ratio is upper bounded by $\frac{3}{2}$.
\end{theorem}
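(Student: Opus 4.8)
The plan is to exploit the large value of $T$ in this regime together with a uniform, profile-independent upper bound $u'_1 \le 1$. Recall from the discussion preceding the three cases that $u_1 = T$ and $u'_1 = T'$, so it suffices to show $T' \le \tfrac{3}{2} T$. Since here $T \ge \tfrac{2}{3}$, we have $\tfrac{3}{2} T \ge 1$, and therefore it is enough to prove the profile-independent bound $T' \le 1$, i.e.\ that all items of $\overline{O}$ are consumed by time $1$ in the manipulation profile.

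First I would establish the key structural fact that, when $n \ge m$, the entire eating process terminates by time $1$ in \emph{any} reported profile. The point is that each agent submits a strict, total ranking of all $m$ items, so an agent never idles until every item has been consumed: as long as anything remains, it has a most-preferred remaining item to eat. Consequently all $n$ agents eat continuously at unit speed from time $0$ up to the common termination moment $T_{\mathrm{end}}$. The total mass they consume is then $n \cdot T_{\mathrm{end}}$, and this must equal the total mass $m$ of all items, giving $T_{\mathrm{end}} = m/n \le 1$. Applying this to the manipulation profile, every item, and in particular every item of $\overline{O}$, is gone by time $1$, so $T' \le T_{\mathrm{end}} = m/n \le 1$. (Equivalently, agent $1$ eats for total duration $T_{\mathrm{end}} = m/n$, so its total allocation $\sum_j x'_{1j} = m/n \le 1$ and hence $u'_1 \le 1$ directly.)

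With this bound in hand the theorem is immediate. Using $u'_1 = T'$ and the case hypothesis $u_1 = T \ge \tfrac{2}{3}$,
\[
u'_1 = T' \le 1 = \tfrac{3}{2}\cdot\tfrac{2}{3} \le \tfrac{3}{2} T = \tfrac{3}{2} u_1,
\]
so the incentive ratio $u'_1/u_1$ is at most $\tfrac{3}{2}$ in this case.

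The only real content is the uniform termination bound $T_{\mathrm{end}} = m/n$, which is exactly where the hypothesis $n \ge m$ enters and where the argument would fail for $n < m$; I expect this to be the main (indeed essentially the only) obstacle. I would prove it by the mass-accounting argument above, taking care to justify that no agent stops eating before the global end, a consequence of preferences being strict and total over all items. If preferred, the same conclusion can instead be read off from Lemma~\ref{pause} with agent $1$ paused throughout, but the direct counting argument is cleaner and self-contained, and it simultaneously justifies the restriction $T \le 1$ used to delimit the three cases.
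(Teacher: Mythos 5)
Your proposal is correct and takes essentially the same route as the paper: the paper's (very terse) proof likewise bounds the manipulated utility by $u'_1 \le \frac{m}{n} \le 1$ and divides by $u_1 = T \ge \frac{2}{3}$ to get the ratio $\frac{3}{2}$. The only difference is that you spell out the mass-accounting argument (all $n$ agents eat continuously until the total mass $m$ is exhausted, so the process ends at time $m/n$) that the paper leaves implicit, which is a reasonable bit of added rigor but not a different approach.
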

This is trivial, since the optimal utility $u_1'$ is upper bound by $\frac{m}{n}\leq 1$. Hence the utility achieved in truthful profile is quite large compared to $u_1'$. Formally, $r^{PS} = \max \frac{u'_1}{u_1} \le \frac{1}{\frac{2}{3}} \le \frac{3}{2}$.

Combining Theorems~\ref{case1},~\ref{case2},~\ref{case3}, we complete the proof of our main Theorem.


\section{The Tight Bound Example}
In this section, we show that the upper bound $\frac{3}{2}$ is tight. 
The following instance and the manipulation provide us a tight lower bound example of the incentive ratio.  

In the instance, the $n$ agents' preferences are 

\begin{tabular}{ c c c c c }
  &1:&   $o_1 \succ o_2 \succ \cdots \succ o_{n-1} \succ o_n$  &  \\
  &2:&   $o_1 \succ o_2 \succ \cdots \succ o_{n-1} \succ o_n$  &  \\
 &$i$:&  $o_2 \succ o_3 \succ \cdots \succ o_{n} \,\,\,\,\ \succ o_1$ &    $i=3,\cdots,n$
\end{tabular}

Agent 1 is interested in the first $\frac{n}{2} - 1$ items, i.e., $\overline{O} = \{o_1, \cdots, o_{\frac{n}{2} - 1}\}$. Then in the truthful profile, $u_1=\frac{1}{2}$.
Agent 1 will only get half fraction of item 1.

By using the strategy $o_2 \succ o_3 \succ \cdots \succ o_{\frac{n}{2}-1} \succ o_1 \succ o_{\frac{n}{2}} \succ \cdots \succ o_n$. Agent 1 will get $\frac{1}{n-1}$ fraction of item 2 to $\frac{n}{2}-1$ and $\frac{1}{4}$ fraction of item 1. Agent 1's utility becomes $u'_1=\frac{3}{4}$. So the ratio is $\frac{3}{2}$.


\section{Experimental Evaluation}
In this section, we present numerical experiments on the extent to which an agent can increase its utility by unilateral manipulation in the Probabilistic Serial mechanism. While we investigated the theoretical incentive bounds of the mechanism in the worst-case framework, the purpose of this section is to evaluate its performance in an average-case framework. 

We set up our experiments as follows. We set $n=m$, i.e.,  the number of agents is equal to the number of items. We vary this number from 8 to 20. For each value of $n$, we generate 10000 of these instances. We construct an instance by uniformly at random and independently generating each agent's ordinal preferences. We make the manipulator's cardinal preferences dichotomous. This is because of Lemma~\ref{dichotomous} -- it always generates a larger utility gain comparing to the corresponding non-dichotomous preferences. We vary the number of items the manipulator is interested in, say $k$, from 2 to 6. 
For each of these instances, we enumerate the manipulator's all $k!$ strategies, in order to figure out the largest possible utility the agent can obtain. By dividing the largest utility attainable of its utility obtained in the truthful profile, we get a ratio to evaluate the agent's utility increment.   

Our experimental results are presented in Figure 1. We can observe that the ratio is between 1.02 and 1.06, which is much smaller than the worst-case incentive ratio bound of 1.5. In particular, for instances with a fixed number of agents/items, the expected ratio is decreasing in the number of items that the manipulator is interested. For instances with a fixed number of items that the manipulator is interested, the expected ratio is increasing in the number of agents/items. 

\begin{figure}[h]
    \centering
    \includegraphics[width=0.7\columnwidth]{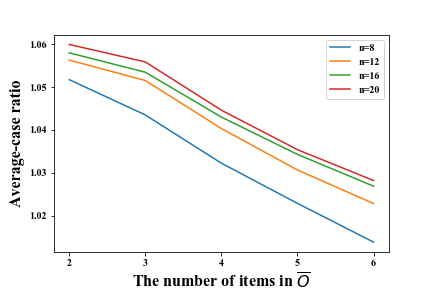}
    \caption{Experimental Evaluation}
    \label{fig:my_label}
\end{figure}


\section{Conclusion}
By knowing that the Probabilistic Serial mechanism is not incentive-compatible, in this paper, we examined the degree of an agent's incentive to manipulate the mechanism. In the form of incentive ratio, we showed that no agent is able to increase its utility by a 50\% through strategic behaviors. This ratio is the strongest guarantee in the worst-case sense, by allowing that the agent has complete information about other agents' private information and is perfectly rational. In addition, we conduct experiments to examine the manipulation incentive in an average-case sense. The evaluation demonstrated that the utility-incremental ratio is much smaller than the theoretical incentive ratio bound. 
The evaluation bears out the supposition that Probabilistic Serial is approximately incentive-compatible in practice, even for small size instances. 
We hope that this work offered a better understanding of the robustness of the PS mechanisms against manipulations, which is one step beyond knowing that the mechanism could be manipulated. Perhaps the next step could be a formal analytical characterization of the average-case incentive ratio.

\section{ Acknowledgments}
Zihe Wang was supported by the Shanghai Sailing Program (Grant No. 18YF1407900) and the National NSFC (Grant No. 61806121). Part of this work was done when Jie Zhang was visiting Peking University.

\bibliographystyle{aaai}
\bibliography{IR}

\end{document}